\documentclass[a4paper,journal]{IEEEtran}  
\usepackage{amsmath}
\usepackage{color}
\usepackage{float}
\usepackage{amsmath}
\usepackage{amssymb}
\usepackage{mathrsfs}
\usepackage{mathtools}
\usepackage[table]{xcolor}
\usepackage{arydshln}
\usepackage{amsthm}
\usepackage{booktabs}
\usepackage{cite}
\usepackage{pifont}
\usepackage{amssymb}
\usepackage{multirow}
\usepackage{mathtools}
\usepackage{breqn}

\usepackage{graphicx}

\makeatletter

\usepackage{algorithm}
\usepackage{algpseudocode}

\usepackage{algorithmicx}
\usepackage{algpseudocode}
\usepackage{amsmath,amsfonts,amssymb}
\usepackage{amssymb}

\usepackage{array,booktabs,arydshln}
\usepackage{arydshln}

\usepackage{float} 

\usepackage[caption=false,font=footnotesize]{subfig}

\@ifundefined{showcaptionsetup}{}{%
 \PassOptionsToPackage{caption=false}{subfig}}

\usepackage{subfig}
\usepackage{adjustbox}
\usepackage{multicol}
\usepackage{multirow}

\makeatother

\newcommand{\hidden}[1]{}

\usepackage{tikz}
\usepackage{tablefootnote}
\usepackage{booktabs} 
\usepackage{graphicx} 
\usepackage{pifont} 
\usepackage{array, colortbl,multirow}

\usetikzlibrary{matrix,shapes.arrows}
\renewcommand{\arraystretch}{1.2}

\newtheorem{theorem}{\bf {Theorem}}

\usepackage{tikz}

\usetikzlibrary{matrix,shapes.arrows}
\renewcommand{\arraystretch}{1.2}

\usepackage[justification=centering]{caption}
\usepackage{hyperref}
\hypersetup{
	unicode=false,          
	pdftoolbar=true,        
	pdfmenubar=true,        
	pdffitwindow=false,     
	pdfstartview={FitH},    
	pdftitle={My title},    
	pdfauthor={Author},     
	pdfsubject={Subject},   
	pdfcreator={Creator},   
	pdfproducer={Producer}, 
	pdfkeywords={keyword1, key2, key3}, 
	pdfnewwindow=true,      
	colorlinks=true,       
	linkcolor=red,          
	citecolor=red,        
	filecolor=red,      
	urlcolor=red           
}
\newfloat{function}{htbp}{lof}
\floatname{function}{Function}
\begin{document}
	\markboth{IEEE Submission}%
	{Nguyen \MakeLowercase{\textit{et al.}}: }
	\title{Exact DC Representation of Multi-Tier Offloading Product in SAGINs via Quantifier Elimination}
	\author{
		Minh-Tuong Nguyen, and Vo Phi Son
		\thanks{M.-T. Nguyen and V. P. Son are with the Smart Green Transformation Center (GREEN-X) and the College of Engineering and Computer Science, VinUniversity, Vinhomes Ocean Park, Hanoi 100000, Vietnam (emails: \{tuong.nm, son.vp\}@vinuni.edu.vn).He is also with the Faculty of Engineering and Information Technology, University of Technology Sydney, Sydney, NSW 2007, Australia (e-mail: tuong.nguyen@student.uts.edu.au).}%
	}%
	\maketitle
	\begin{abstract}
		Task offloading in space--air--ground integrated networks (SAGIN) yields non-convex signomial or polynomial programs with cubic couplings. Sequential geometric programming (SGP) approximates them via exponential cone representations, which exceeds the second-order cone programming (SOCP) ceiling of embedded code generators such as CVXPYgen. We derive a difference-of-convex (DC) representation exactly certified over the reals by quantifier elimination and apply the convex--concave procedure (CCP), whose SOCP subproblems remove this structural obstacle to future embedded code generation. Comparisons with the BARON global solver show that SGP and CCP both attain near-global solutions. CCP further reduces the average solution time from SGP's $0.1012$~s to $0.0113$~s, an $8.9$-fold speedup.
	\end{abstract}
	\begin{IEEEkeywords}
		Difference-of-convex, quantifier elimination, space--air--ground integrated networks, task offloading.
	\end{IEEEkeywords}
	\IEEEpeerreviewmaketitle{}
	\section{Introduction} 
	\label{sec:Introduction}
	Satellite communication enables global connectivity by extending coverage to remote regions~\cite{le2025performance,jung2023satellite,maral2020satellite}. Meanwhile, multi-access edge computing (MEC) moves computation closer to data sources to meet stringent delay and bandwidth requirements~\cite{shi2022delay,huang2024joint}.
In space--air--ground integrated networks (SAGINs), MEC extends across ground, aerial, and satellite nodes. These architectures incorporate unmanned aerial vehicle (UAV) relays, satellite backhaul, and inter-satellite links (ISLs) to improve end-to-end offloading performance by extending coverage even further~\cite {huang2024joint}.
   However, multi-tier offloading couples the partial offload decision assigned to different tiers. This coupling yields non-convex signomial terms, specifically cubic products of decision variables~\cite{huang2024joint}.
    These products are signomials, making it practical to approximate this signomial problem via sequential geometric programming (SGP)~\cite{van2018joint,filabadi2024exponential}.
    In SGP, each subproblem is a GP~\cite{van2018joint}, whose standard conic form requires exponential-cone transformations~\cite{filabadi2024exponential}. This creates a deployment gap for embedded optimization: optimization models with logarithmic or exponential functions and exponential-cone constraints are slower or less reliable to solve~\cite{chen2023mixed}, whereas CVXPYgen mainly targets embedded code generation for second-order cone programming (SOCP) models~\cite{schaller2022embedded}.
	Accordingly, deployment-oriented SAGIN offloading motivates an approximation whose subproblems are SOCPs.
	Therefore, we formulate a quantifier-elimination (QE) problem~\cite{davenport1988real} that yields a difference-of-convex (DC) representation for the coupled offloading signomials with an exact certificate over the reals.
	We then apply the convex--concave procedure (CCP)~\cite{lipp2016variations} to the transformed SAGIN offloading problem. Because the DC components are SOC-representable, each CCP iteration reduces to an SOCP.
	As such, this work removes the exponential-cone requirement at the optimization model level and provides a basis for future embedded implementations. Numerical results are provided to demonstrate its computational efficiency in terms of solution time.

	\section{System Model and Problem Formulation}
	\subsection{System Model}
	\begin{figure}
		\centering
		\includegraphics[width=\columnwidth]{"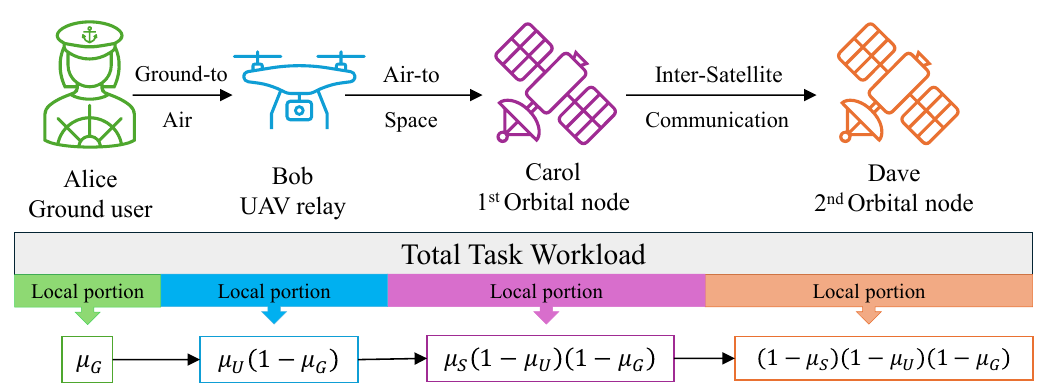"}
		\caption{Representative multi-tier SAGIN offloading chain with partial offloading decisions ${\mu _G}$, ${\mu_U}$, and ${\mu_S}$.}
		\label{fig:system-model}
	\end{figure}
	\begin{figure}
		\centering
		\includegraphics[width=0.85\columnwidth,trim={0 20pt 0 5pt},clip]{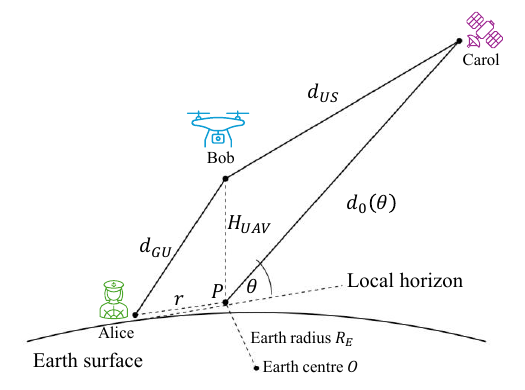}
		\caption{System geometry.}
		\label{fig:geometrybai2dc}
	\end{figure}
We consider a maritime vehicular SAGIN in which terrestrial edge infrastructure is unavailable, and sparse node availability yields a single feasible relay path over the considered task interval. As illustrated in Fig.~\ref{fig:system-model}, we study one active end-to-end offloading chain comprising a ground user (Alice, $G$), a UAV relay and compute node (Bob, $U$), and two LEO compute nodes (Carol, $S_1$; Dave, $S_2$). The task originated from Alice and is partially offloaded along the chain $G \to U \to S_1 \to S_2$. This chain illustrates the coupling among the offloading decisions across multiple SAGIN tiers.
	As shown in Fig.~\ref{fig:geometrybai2dc}, the Alice to Bob (GU) distance is $d_{GU}=\sqrt{H_{\mathrm{UAV}}^{2}+r^{2}}$, where $H_{\mathrm{UAV}}$ is the UAV altitude and $r$ is the horizontal offset between $G$ and $U$. Carol orbits at an altitude $h_{\mathrm{LEO}}$ above a spherical Earth of radius $R_E$, and is observed at an elevation angle $\theta$ measured from the local horizon at the sub-UAV point $P$ on the Earth's surface. From the geometry of the model, the surface-to-satellite distance is approximated as $d_{0}(\theta)\approx\sqrt{(R_E\sin\theta)^{2}+2R_Eh_{\mathrm{LEO}} +h_{\mathrm{LEO}}^{2}}-R_E\sin\theta$. Then, the UAV-to-satellite distance is $d_{\mathrm{US}} \approx \sqrt{d_0^2 + H_{\mathrm{UAV}}^2 - 2d_0H_{\mathrm{UAV}}\sin\theta}$.
	The ISL distance between $S_1$ and
	$S_2$ is $d_{\mathrm{ISL}}$.
	The Alice to Bob channel is modeled as a Rician fading channel, with $h_{GU}=\sqrt{\xi/(\xi+1)}e^{j\phi_{GU}}d_{GU}^{-\alpha_L/2}+\sqrt{1/(\xi+1)}\hat g d_{GU}^{-\alpha_N/2}$, where $\xi$ is the Rician factor, $\alpha_L$ and $\alpha_N$ are the line-of-sight (LoS) and non-line-of-sight (NLoS) exponents, $\hat g\sim\mathcal{CN}(0,1)$, and $\phi_{GU}\sim\mathcal{U}[0,2\pi)$~\cite{huang2024joint,mckay2005capacity}. Its rate is $R_{GU}=B_{GU}\log_2(1+P_G|h_{GU}|^2/\sigma_{GU}^2)$.
	The Bob to Carol channel coefficient is $h_{US}=\sqrt{\delta\lambda/(4\pi d_{US})}e^{j\phi_{US}}$, where $\delta$ is the beam gain and $\phi_{US}\sim\mathcal{U}[0,2\pi)$. Thus, $R_{US}=B_{US}\log_2(1+P_U|h_{US}|^2/\sigma_{US}^2)$ is deterministic for fixed geometry since $|h_{US}|^2=\delta\lambda/(4\pi d_{US})$.
	For the Carol to Dave ISL, $\mathrm{SNR}_{\mathrm{ISL}}=P_SG_tG_rL_p/[k_BT_{\mathrm{sys}}B_{\mathrm{ISL}}(4\pi d_{\mathrm{ISL}}f_c/c)^2]$, where $G_t$ and $G_r$ are antenna gains and $L_p$ is the pointing loss~\cite[\S~5.12]{maral2020satellite}. The ISL rate is $R_{\mathrm{ISL}}=B_{\mathrm{ISL}}\log_2(1+\mathrm{SNR}_{\mathrm{ISL}})$.
	From Alice, a task of $W$ bits and required task complexity $F$ in cycles is processed by the chain. Let $\mu_G,\mu_U,\mu_S\in[0,1]$ be the offload fractions processed at Alice, Bob, and Carol. Bob computes these fractions onboard. He terminates the link from Alice and
	originates the link to Carol, so $R_{GU}$, $R_{US}$, and $d_{US}$ are available
	at Bob. We assume that
	Bob returns $\mu_G$ to Alice before she transmits, applies $\mu_U$ locally, and
	appends $\mu_S$ to the payload forwarded to Carol, so the offloading decisions travel with the task itself rather than in separate control messages.
	Having this, the computation delays are adopted from~\cite{huang2024joint} as follows: $T_G^C=\mu_GF/f_G$, $T_U^C=\mu_U(1-\mu_G)F/f_U$, $T_{S_1}^C=\mu_S(1-\mu_U)(1-\mu_G)F/f_{S_1}$, and $T_{S_2}^C=(1-\mu_S)(1-\mu_U)(1-\mu_G)F/f_{S_2}$.
	With transmission overhead $b$, the transmission delays are $T_{GU}^T=b(1-\mu_G)W/R_{GU}$, $T_{US}^T=d_{US}/c+b(1-\mu_U)(1-\mu_G)W/R_{US}$, and $T_{S_1S_2}^T=d_{\mathrm{ISL}}/c+b(1-\mu_S)(1-\mu_U)(1-\mu_G)W/R_{\mathrm{ISL}}$~\cite{huang2024joint}.
	\subsection{Problem Formulation}
	We optimize the worst-case delay $T=\max(T_A,T_B,T_C,T_D)$, where $T_A=T_G^C$, $T_B=T_{GU}^T+T_U^C$, $T_C=T_{GU}^T+T_{US}^T+T_{S_1}^C$, and $T_D=T_{GU}^T+T_{US}^T+T_{S_1S_2}^T+T_{S_2}^C$~\cite{huang2024joint}. These four delay terms correspond to the possible completion points of the task, namely Alice, Bob, Carol, and Dave. Introducing the epigraph variable $z$, the worst-case delay minimization problem is written as
	\begingroup
	\allowdisplaybreaks
	\begin{subequations}
		\label{eq:optimization_problem_example}
		\begin{IEEEeqnarray}{rCl}
			&& \min_{\substack{\mu_{U},\, \mu_{G},\, 
					\mu_{S},\, z}}
			\quad z
			\label{eq:objective_example} \\
			&& \mathrm{s.t.}\quad
			T_{\mathcal{X}} - z \le 0, \quad \forall \mathcal{X} \in \{A,B,C,D\}
			\label{cons:Tg1} \\
			&& \phantom{\mathrm{s.t.}}\quad
			\mu_{U},\, \mu_{G},\, \mu_{S} \in [0,1].
			\label{cons:bounds}
		\end{IEEEeqnarray}
	\end{subequations}
	\endgroup
	The non-convexity of problem~\eqref{eq:optimization_problem_example} is induced by coupling decisions along the selected Alice-to-Dave route. This dependency introduces coupling among offloading decisions ${\mu _G}$, ${\mu_U}$, and ${\mu_S}$, resulting in signomial terms that are not directly compatible with convex optimization methods. The considered chain produces the cubic instance for our numerical evaluation, while Theorem~\ref{theorem_1} provides an exact representation for a general $N$-factor sequential coupling.
	\section{Proposed Approach}
	Motivated by the broad representational capacity of DC formulations for non-convex problems~\cite{lipp2016variations}, we address \eqref{eq:optimization_problem_example} through DC programming. This approach requires expressing all non-convex terms in \eqref{cons:Tg1} as $f(x)=g(x)-h(x)$~\cite{lipp2016variations}. Such DC representations are often non-trivial. For example, $f(x)=3x^3{+}3x$ admits the DC representation ${(\sqrt[4]{3/8}x{+}\sqrt[4]{3/8})^4} - {(\sqrt[4]{3/8}x{-}\sqrt[4]{3/8})^4}$, although $f(x)$ contains no irrational terms. Finite-precision interior-point methods cannot provide exact certificates for these DC representations~\cite{ahmadi2018dc,bomze2004undominated,murti2024lp}. A QE method such as Cylindrical Algebraic Decomposition (CAD) provides exact certification through minimal polynomials and root indices~\cite{maaz2025new,davenport1988real}. Accordingly, Section~\ref{sec:Quantifier Elimination Formulation} formulates the exact DC representation of the coupled multi-tier offloading products in problem~\eqref{eq:optimization_problem_example} as a QE problem solvable by CAD~\cite{maaz2025new}. Section~\ref{sec:ortho} then introduces a numerically stable change of variables, yielding a formulation solvable by CCP in Section~\ref{sec:final_solve}.
	\subsection{QE Formulation for Exact DC Representation of Multi-Tier Offloading Product} \label{sec:Quantifier Elimination Formulation}
	Based on the simple observation that $x_1x_2 = \frac{1}{4}(x_1+x_2)^2 - \frac{1}{4}(x_1-x_2)^2$ for $N=2$, we construct an ansatz for the general case by combining all exponentiation instances of $u \triangleq x_1 \pm x_2 \pm \cdots \pm x_N$ with undetermined coefficients $d_i$.  Therefore, we formulate the following QE problem $\forall \{x_1,x_2\} \,\left[ {{x_1}{x_2} = {d_1}{{\left( {{x_1} + {x_2}} \right)}^2} + {d_2}{{\left( {{x_1} - {x_2}} \right)}^2}} \right]$ for $N=2$. Solving it using CAD yields ${d_1} = \frac{1}{4}$ and ${d_2}= -\frac{1}{4}$, which confirms our ansatz. For $N=4$, we obtain $x_1x_2x_3x_4 = (1/192) \sum_{s_2, s_3, s_4 \in \{\pm 1\}} s_2 s_3 s_4 (x_1 + s_2 x_2 + s_3 x_3 + s_4 x_4)^4$. Note that a DC representation is only possible for even $N$. In particular, all $u^N$ terms can be expressed as $\left\lceil {{{\log }_2}N} \right\rceil $ cone constraints via the following recursive SOCP reformulation:  
	\begin{equation}
		\label{socp}
		u^N \le t
		\;\Longleftrightarrow\;
		\left\{
		\begin{aligned}
			y^2 &\le t\,u^{N\bmod 2},\\
			u^{\lceil N/2\rceil} &\le y.
		\end{aligned}
		\right.
	\end{equation}
	For the multi-tier offload coupling product, we first introduce the variables ${a_n} \! \in \! \left\{ {0,1} \right\}$ to determine the sign of each $x_n$ within $u$. Let  $\Xi \!=\! \sum_{n=2}^N (-1)^{a_n} x_n$ and $i =  1 + \sum_{n = 2}^N {{a_n}} {2^{N - n}}$. The QE formulation for exact DC representation is
	\begin{align}
		\forall \left\{ {{x_n}} \right\}_{n = 1}^N\prod\limits_{n = 1}^N {{x_n}}  = \sum_{{a_2} = 0}^1 {\sum_{{a_3} = 0}^1  \cdots  } \sum_{{a_N} = 0}^1 {{d_i}} {\left( {{x_1} + \Xi } \right)^N} \label{mainQE}
	\end{align}
	Unlike finite-precision constructions~\cite{ahmadi2018dc,bomze2004undominated,murti2024lp}, our QE approach provides an exact certificate, albeit with a doubly exponential complexity caveat~\cite{davenport1988real}. Therefore, the algebraic pattern of the verified solutions for $N=2$, and $N=4$ is generalized in Theorem~\ref{theorem_1} to avoid the aforementioned complexity.
	\begin{theorem} \label{theorem_1}
		Let $\Omega \!=\! (-1)^{\sum_{n=2}^N a_n}$ and $S_N \!=\! \sum_{a_2=0}^1 \sum_{a_3=0}^1 \cdots \sum_{a_N=0}^1 \Omega\,(x_1 {+} \Xi)^N$. The solution of \eqref{mainQE} yields the exact DC representation of the multi-tier offload-coupling product (first equality in \eqref{exact_solution}). Its signomial counterpart follows by sign inversion (second equality in \eqref{exact_solution}).
		\begin{alignat}{2}
			\prod\limits_{n = 1}^N {x_n} &= \frac{S_N}{2^{N - 1}N!}; \label{exact_solution}
			\qquad&
			-\prod\limits_{n = 1}^N {x_n} &= \frac{-S_N}{2^{N - 1}N!}. 
		\end{alignat}
	\end{theorem}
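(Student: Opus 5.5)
We prove the first equality in \eqref{exact_solution} as a polarization identity by expanding $S_N$ with the multinomial theorem. Write $s_1=1$ and $s_n=(-1)^{a_n}\in\{\pm1\}$ for $n\ge2$. Then $x_1+\Xi=\sum_{n=1}^N s_n x_n$, $\Omega=\prod_{n=2}^N s_n=\prod_{n=1}^N s_n$, and the nested sum over $a_2,\dots,a_N$ becomes a sum over $(s_2,\dots,s_N)\in\{\pm1\}^{N-1}$. Expanding,
\begin{equation*}
S_N=\sum_{s_2,\dots,s_N}\Big(\prod_{m=1}^N s_m\Big)\sum_{k_1+\cdots+k_N=N}\binom{N}{k_1,\dots,k_N}\prod_{n=1}^N s_n^{k_n}x_n^{k_n}.
\end{equation*}
Exchanging the finite sums, the coefficient of each monomial $\prod_n x_n^{k_n}$ is $\binom{N}{k_1,\dots,k_N}$ times the character sum
\begin{equation*}
\prod_{n=2}^N\Big(\sum_{s_n=\pm1}s_n^{k_n+1}\Big).
\end{equation*}

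The key step is evaluating this character sum. Each factor $\sum_{s=\pm1}s^{k+1}$ equals $2$ if $k$ is odd and $0$ if $k$ is even. So a monomial survives only if $k_n$ is odd, hence $k_n\ge1$, for every $n=2,\dots,N$. These $N-1$ exponents then sum to at least $N-1$, so $k_1\le1$. I will show that $k_1=0$ is impossible:
\begin{itemize}
\item If $k_1=0$, the odd exponents $k_2,\dots,k_N$ sum to $N$.
\item A sum of $N-1$ odd numbers has the parity of $N-1$, which differs from the parity of $N$, a contradiction.
\end{itemize}
Hence $k_1=1$, and then $\sum_{n\ge2}k_n=N-1$ with every $k_n\ge1$ forces $k_n=1$ for all $n$. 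Note that $s_1=1$ is never summed, so the $k_1$ exponent imposes no parity condition; the argument relies only on this counting and parity.

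The unique surviving monomial is therefore $x_1x_2\cdots x_N$. Its multinomial coefficient is $N!/(1!)^N=N!$, and its character sum is $2^{N-1}$. This gives $S_N=2^{N-1}N!\prod_{n=1}^N x_n$, which is the first equality in \eqref{exact_solution}. Since this is a polynomial identity over $\mathbb{Z}$, it holds for all real $\{x_n\}$, so the $d_i=\Omega/(2^{N-1}N!)$ solve \eqref{mainQE}. The second equality follows by multiplying both sides by $-1$.

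For the DC structure, I will split $S_N$ into the terms with $\Omega=+1$ and those with $\Omega=-1$. When $N$ is even, $u^N$ is convex in $x$, so the product equals $g-h$ with $g=\sum_{\Omega=1}u^N/(2^{N-1}N!)$ and $h=\sum_{\Omega=-1}u^N/(2^{N-1}N!)$, both convex, and each term is SOC-representable via \eqref{socp}. Consistency checks are $N=2$, which gives the factor $1/4$, and $N=4$, which gives $1/192$; both match the CAD results.

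The main obstacle is the bookkeeping in the character-sum step. Two points need care there: the asymmetry that $s_1$ is fixed rather than summed, and the parity argument that excludes $k_1=0$.
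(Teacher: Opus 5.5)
Your proof is correct and follows essentially the same route as the paper's. The only difference is how monomials with some even $k_j$ ($j\ge 2$) are eliminated: the paper uses a sign-flip pairing to show $S_N$ is odd in each $x_j$, whereas you evaluate the character sum $\prod_{n=2}^N\sum_{s_n=\pm1}s_n^{k_n+1}$ directly, which also supplies the factor $2^{N-1}$ in the same step; the degree/parity argument forcing $k_1=1$ and the final coefficient $2^{N-1}N!$ are then the same as in the paper.
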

	\begin{proof}
		See Appendix~\ref{Exact_proof}.
	\end{proof}	
	For completeness, we also propose a binary presentation for the first equality in Eq.~\eqref{exact_solution} as shown in Table~\ref{tab:parity_functions}. To collect terms for $g(x)$, we need $\sum_{n=2}^{N} a_n$ to be even, which occurs when an even number of the $a_n$ values equal $1$. Conversely, for $h(x)$, this sum must be odd. This can be done by first listing all distinct binary assignments for $a_2, a_3, \dots, a_{N-1}$ following the natural decimal order ($0,1,2,3,\ldots$). The final bit $a_N$ can be determined to satisfy the parity constraint. Specifically, if an even parity is required, we set $a_N = 0$ when the partial sum $S'  =  a_2  +  a_3  +  \dots  +  a_{N-1}$ is even, or $a_N = 1$ when $S'$ is odd. For odd parity, the assignment of $a_N$ is reversed through NOT logic by the $a_4$ column for the lower half of Table~\ref{tab:parity_functions}.
	\begin{table}
		\centering
		\caption{\textnormal{Bit values $a_j$ $(j \ge 2)$ for $h(x)$ and $g(x)$, $N = 4$}}
		\label{tab:parity_functions}
		\renewcommand{\arraystretch}{1.2}
		\begin{tabular}{|c| c c |c| c|}
			\hline
			\rowcolor[gray]{0.93}
			& \multicolumn{2}{c|}{$a_2, a_3$} & \textbf{$a_4$} & Parity \\ 
			\hline
			0 & 0 & 0 & \multicolumn{1}{c|}{0} & \multirow{4}{*}{\begin{tabular}[c]{@{}c@{}}Even parity\\ for $g(x)$\end{tabular}} \\ 
			\cline{4-4}
			1 & 0 & 1 & \multicolumn{1}{c|}{1} & \\ 
			\cline{4-4}
			2 & 1 & 0 & \multicolumn{1}{c|}{1} & \\ 
			\cline{4-4}
			3 & 1 & 1 & \multicolumn{1}{c|}{0} & \\ 
			\hline
			\hline
			0 & 0 & 0 & \multicolumn{1}{c|}{1} & \multirow{4}{*}{\begin{tabular}[c]{@{}c@{}}Odd parity\\ for $h(x)$\end{tabular}} \\ 
			\cline{4-4}
			1 & 0 & 1 & \multicolumn{1}{c|}{0} & \\ 
			\cline{4-4}
			2 & 1 & 0 & \multicolumn{1}{c|}{0} & \\ 
			\cline{4-4}
			3 & 1 & 1 & \multicolumn{1}{c|}{1} & \\ 
			\cline{4-4}
			\hline
		\end{tabular}
	\end{table}
	\subsection{Hadamard Change of Variables} \label{sec:ortho}
	First, we introduce the column vector $\mathbf{t} = \left[ {{t_1},{t_2}, \cdots ,{t_{{2^{N {-} 1}}}}} \right]^\top$, where each entry of $\mathbf{t}$ stores a linear combination of the form 
	$x_1 \pm x_2 \pm \cdots\pm x_N$. Thus, each entry of $\mathbf{t}$ corresponds to a term of
	$x_1 + \Xi$ within the sum $S_N$. For $N=4$, $\mathbf{t}$ can be written as $\mathbf{t} = \mathbf{A_4} \mathbf{x}$ where $\mathbf{x} = [x_1, x_2, x_3, x_4]^\top$. 
	The matrix $\mathbf{A_4}$ is partitioned as $\mathbf{A_4} = [\mathbf{A_4^U}, \mathbf{A_4^L}]^\top$, 
	where both $\mathbf{A_4^U}$ and $\mathbf{A_4^L}$ have a Hadamard structure~\cite{horadam2012hadamard}.
	\begin{equation}
		\resizebox{0.90\columnwidth}{!}{$
			\mathbf{A_4^U}= \left[ {\begin{array}{*{20}{c}}
					1  &  1  &  1  &  1  \\
					1  &  1  & -1  & -1  \\
					1  & -1  &  1  & -1  \\
					1  & -1  & -1  &  1  \\
			\end{array}} \right]; \quad
			\mathbf{A_4^L} = \left[ {\begin{array}{*{20}{c}}
					1  &  1  &  1  & -1  \\
					1  &  1  & -1  &  1  \\
					1  & -1  &  1  &  1  \\
					1  & -1  & -1  & -1  
			\end{array}} \right]
			$}.
	\end{equation}
	Specifically, the upper half $\mathbf{A_4^U}$ follows the Sylvester construction of $\mathbf{H_4} = \mathbf{H_2} \otimes \mathbf{H_2}$ with additional row permutations that lead to the efficient back substitution $\mathbf{x} = \frac{1}{4}\mathbf{A_4^U}\left[ {{t_1},{t_2},{t_3},{t_4}} \right]^\top$~\cite[\S$2$]{horadam2012hadamard}. Its Hadamard equivalent lower half $\mathbf{A_4^L}$ is obtained by multiplying the last column of $\mathbf{A_4^U}$ by ${-}1$ (bit inversion of column $a_4$ in Table~\ref{tab:parity_functions}), and hence $\mathbf{x} = \frac{1}{4}{\left( \mathbf{A_4^L} \right)^\top}{\left[ {{t_5},{t_6},{t_7},{t_8}} \right]^\top}$ \cite[\S$2$]{horadam2012hadamard}. As such, we have
	\begin{align} 
		\frac{1}{4}\mathbf{A_4^U}\left[ {{t_1},{t_2},{t_3},{t_4}} \right]^\top = \frac{1}{4}(\mathbf{A_4^L})^\top\left[ {{t_5},{t_6},{t_7},{t_8}} \right]^\top \label{eq: doi_bien}
	\end{align}
	For the general case, we consider the matrix $\mathbf{A_N} \in \{ \pm 1 \}^{2^{N-1} \times N}$. This means that $\mathbf{A_N}$ has $2^{N-1}$ rows and $N$ columns, and every entry is $\pm 1$. The first column is a vector of ones with length $2^{N-1}$. For each column $j$, with $j = 2,\dots,N$, the entries alternate between  $\pm 1$. These alternating signs are written as $(-1)^{a_j}$, where $a_j$ is defined in the header of Table~\ref{tab:parity_functions}. 
	\textcolor{blue}{}
	\begin{theorem} \label{theorem_2}
		The change of variables $\mathbf{t} = \mathbf{A_N} \mathbf{x}$, characterized by the full-rank column-orthogonal matrix $\mathbf{A_N}$, is optimally conditioned with condition number $\kappa(\mathbf{A_N})=1$.
	\end{theorem}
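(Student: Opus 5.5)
The plan is to show that the columns of $\mathbf{A_N}$ are mutually orthogonal and all have the same Euclidean norm $\sqrt{2^{N-1}}$. Then $\mathbf{A_N}^\top\mathbf{A_N}=2^{N-1}\mathbf{I}_N$. From this the claims follow at once: the matrix is full column rank, $\mathbf{A_N}/\sqrt{2^{N-1}}$ has orthonormal columns, every singular value equals $\sqrt{2^{N-1}}$, and so $\kappa(\mathbf{A_N})=\sigma_{\max}/\sigma_{\min}=1$.

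Step one fixes the row indexing. As in the $\mathbf{t}$ construction and Table~\ref{tab:parity_functions}, each row of $\mathbf{A_N}$ corresponds to exactly one binary tuple $(a_2,\dots,a_N)\in\{0,1\}^{N-1}$. The row index is $i=1+\sum_{n=2}^N a_n2^{N-n}$, and the row is $[1,(-1)^{a_2},\dots,(-1)^{a_N}]$. Since $i$ is a bijection onto $\{1,\dots,2^{N-1}\}$, every tuple occurs exactly once. The partition into $\mathbf{A_N^U}$ and $\mathbf{A_N^L}$ only permutes rows. A row permutation leaves $\mathbf{A_N}^\top\mathbf{A_N}$ unchanged, so the ordering can be ignored.

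Step two computes the diagonal entries. Every entry is $\pm1$, so each column has squared norm $2^{N-1}$.

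Step three computes the off-diagonal entries. For columns $1$ and $j\ge2$, the inner product is $\sum_{a}(-1)^{a_j}$. This sum factorizes as $2^{N-2}\sum_{a_j\in\{0,1\}}(-1)^{a_j}=0$. For columns $j\neq k$, both at least $2$, the inner product $\sum_a(-1)^{a_j+a_k}$ factorizes into $2^{N-3}\bigl(\sum_{a_j}(-1)^{a_j}\bigr)\bigl(\sum_{a_k}(-1)^{a_k}\bigr)=0$. This is the usual character-orthogonality argument on $\{0,1\}^{N-1}$, equivalently the orthogonality of the Sylvester Hadamard columns cited via~\cite{horadam2012hadamard}.

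The main obstacle is making sure the paper's verbal description of $\mathbf{A_N}$ really does list every sign pattern exactly once. The description speaks of "alternating" columns and flipping the last column in the lower half. For $N=4$ one can check this directly: the upper half has even parity and the lower half odd parity, so together they cover all $8$ tuples. In general, the parity-based construction of Table~\ref{tab:parity_functions} gives the same complete enumeration. With that in hand, the factorization argument closes the proof.
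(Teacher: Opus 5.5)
Your proposal is correct and follows essentially the same route as the paper. Both arguments index the rows by $(a_2,\dots,a_N)\in\{0,1\}^{N-1}$, show $\mathbf{A_N}^\top\mathbf{A_N}=2^{N-1}\mathbf{I}_N$, and then read off full rank, all singular values equal to $\sqrt{2^{N-1}}$, and hence $\kappa(\mathbf{A_N})=1$. The differences are cosmetic: you make the off-diagonal sums vanish by factorizing over the product set, whereas the paper pairs rows by toggling $a_k$; and you explicitly check, via row-permutation invariance and the parity split of Table~\ref{tab:parity_functions}, that each sign pattern occurs exactly once, which the paper takes for granted.
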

	\begin{proof}
		See Appendix~\ref{matrix_proof}.
	\end{proof}	
	\subsection{Preprocessing for Convex--Concave Procedure} \label{sec:final_solve}
	Equation~\eqref{exact_solution} is used to obtain the DC representation of the coupling decision variables in \eqref{cons:Tg1}. 
	Next, the quartic terms from the aforementioned DC representation are reformulated via~\eqref{socp}. This transforms problem~\eqref{eq:optimization_problem_example} into a QCQP.
    Finally, all concave quadratic (or quartic) parts are linearized, and Problem~\eqref{eq:optimization_problem_example} is approximated using CCP (Algorithm~1.1 in~\cite{lipp2016variations}), as recommended in~\cite{park2017general}.
    Here, each iteration of CCP involves only an SOCP thanks to~\eqref{socp}.

	To begin with, we introduce the partition $\mathbf{t} \!=\! \left[ {\mathbf{v_1}, \mathbf{v_2}} \right]^\top$, where $\mathbf{v_1} \!=\! {\left[ {{t_1},{t_2},{t_3},{t_4}} \right]^ \top }$, $\mathbf{v_2} \!=\! {\left[ {{t_5},{t_6},{t_7},{t_8}} \right]^ \top }$, $\mathbf{x} \!=\! \left[ {1 - {\mu_{S}},1 - {\mu _{{U}}},1 - {\mu _{{G}}},1} \right]^\top$, $\mathbf{v_1^L} = {\left[ {0,0,0,1} \right]^ \top }$ and $\mathbf{v_1^U} = {\left[ {1,1,1,1} \right]^ \top}$. Using~\eqref{eq: doi_bien} and Theorem~\ref{theorem_2}, we use the change of variable $\mathbf{x} \!=\! \frac{1}{4}\mathbf{A_4^U}\mathbf{v _1} \!= \! \frac{1}{4}{\left( \mathbf{A_4^L} \right)^ \top }\mathbf{v_2}$ to express \eqref{cons:Tg1} in DC form. As such, $(1 {-} {\mu_{S}})(1 {-} {\mu _{{U}}})(1 {-} {\mu _{{G}}})1$ can now be written as ${{{192}^{ {-} 1}}\sum\limits_{i = 1}^4 {t_i^4}  {-} {{192}^{ {-} 1}}\sum\limits_{i = 5}^8 {t_i^4} }$. Then, the remaining terms $1 {-} {\mu_{S}}$, $1 {-} {\mu _{{U}}}$, $1 {-} {\mu _{{G}}}$, $\left( {1 {-} {\mu _{{U}}}} \right)\left( {1 {-} {\mu _{{G}}}} \right)$, ${\mu _{{U}}}\left( {1 {-} {\mu _{{G}}}} \right)$, and ${\mu_{S}}(1 {-} {\mu _{{U}}})(1 {-} {\mu _{{G}}})$ can now be written in linear or DC forms with respect to the entries of $\mathbf{t}$ as ${4^{-1}}\left( {{t_1} {+} {t_2} {+} {t_3} {+} {t_4}} \right)$, ${4^{-1}}\left( {{t_1} {+} {t_2} {-} {t_3} {-} {t_4}} \right)$, ${4^{-1}}\left( {{t_1} {-} {t_2} {+} {t_3} {-} {t_4}} \right)$, ${16^{-1}}\left[ {{{\left( {{t_1} {-} {t_4}} \right)}^2} - {{\left( {{t_2} {-} {t_3}} \right)}^2}} \right]$, ${4^{ - 1}}\left( {{t_1} {-} {t_2} {+} {t_3} {-} {t_4}} \right) {-} {16^{ - 1}}\left[ {{{\left( {{t_1} {-} {t_4}} \right)}^2} {-} {{\left( {{t_2} {-} {t_3}} \right)}^2}} \right]$, and ${16^{-1}}{\left( {{t_1} {-} {t_4}} \right)^2} {+} {192^{-1}}\sum\limits_{i = 5}^8 {t_i^4}  {-} {16^{-1}}{\left( {{t_2} {-} {t_3}} \right)^2} {-} {192^{-1}}\sum\limits_{i = 1}^4 {t_i^4}$, respectively. By substituting all these terms into \eqref{cons:Tg1}, we can write \eqref{cons:Tg1} in DC form as \eqref{cons:DC_form}. In addition, all aforementioned quartic $t_i^4$ terms can be reformulated as SOCP constraints using~\eqref{socp}. 
	Let $\preceq$ denote the component-wise less-than-or-equal inequality which follows the convention in \cite[\S$2.2.4$]{boyd2004convex}. After that \eqref{cons:bounds} is expressed as the linear constraint $\mathbf{v_1^L} \!\preceq \! \mathbf{A_4^U} \,\mathbf{v_1}\preceq \mathbf{v_1^U}$. Then, problem~\eqref{eq:optimization_problem_example} is transformed into a DC QCQP problem with SOCP subproblems at each CCP iteration.
	\begingroup
	\allowdisplaybreaks
	\begin{subequations}
		\label{eq:optimization_problem_transformed}
		\begin{IEEEeqnarray}{rCl}
			&& \min_{\mathbf{t},\, z} \quad z
			\label{eq:objective_transformed} \\
			&& \mathrm{s.t.}\quad
			g_{\mathcal{X}}\left( \mathbf{t},z \right) - h_{\mathcal{X}}\left( \mathbf{t},z \right) \le 0,  \forall \mathcal{X} \in \{A,B,C,D\}
			\label{cons:DC_form} \\
			&& \phantom{\mathrm{s.t.}}\quad
			\mathbf{v}_1^L \preceq \mathbf{A}_4^U \, \mathbf{v}_1 \preceq \mathbf{v}_1^U
			\label{cons:tau} \\
			&& \phantom{\mathrm{s.t.}}\quad
			\tfrac{1}{4} \mathbf{A}_4^U \, \mathbf{v}_1 = \tfrac{1}{4} \bigl(\mathbf{A}_4^L\bigr)^{\top} \, \mathbf{v}_2.
			\label{cons:equality}
		\end{IEEEeqnarray}
	\end{subequations}
	\endgroup
	We can now approximate problem~\eqref{eq:optimization_problem_example} by applying CCP to problem~\eqref{eq:optimization_problem_transformed}.
	\section{Numerical Results}
	We generate $200$ independent instances with random geometries, task profiles, computing capacities, and channel realizations. We set $c=3{\times}10^{8}$~m/s, $k_B=1.38{\times}10^{-23}$~J/K, $R_E=6371$~km, and the noise power to $-100$~dBm.
	For the $G{\to}U$ link, $B_{GU}=10$~MHz, $P_G=0.1$~W, $(\alpha_L,\alpha_N)=(2,2.5)$, and $\xi=10$~dB \cite{huang2024joint,mckay2005capacity}. For the $U{\to}S$ link, $f_c=30$~GHz, $B_{US}=10$~MHz, $P_U=1$~W, $\delta=25$~dB, and $h_{\mathrm{LEO}}=800$~km.
	For the ISL, $f_c=30$~GHz, $B_{\mathrm{ISL}}=1$~GHz, $T_{\mathrm{sys}}=354.81$~K, and $P_S=1$~W \cite{huang2024joint}. We set $b=1$ and uniformly draw $H_{\mathrm{UAV}}\in[50,120]$~m, $r\in[50,500]$~m, $\theta\in[20^\circ,90^\circ]$, $d_{\mathrm{ISL}}\in[100,1500]$~km, $F\in[1,10]$~Gcycles, $G_t=G_r\in[10,40]$~dBi, and $L_p \in [0,3]$~dB~\cite[\S~5.12]{maral2020satellite}.
	The values $W\in[0.1,5]$~MB, $f_{G}\in[0.05,0.3]$~GHz, $f_{U}\in[0.3,1]$~GHz, and $f_{S_1},f_{S_2}\in[0.5,3]$~GHz are independently drawn within their respective intervals.
	
	In Figs.~\ref{fig:convergence}--\ref{fig:sweeppgw}, both SGP (sequential geometric programming~\cite{van2018joint}) and CCP (the convex--concave procedure~\cite{lipp2016variations}) are solved using MOSEK. 
    For comparison, Global denotes the global optimum of problem~\eqref{eq:optimization_problem_example} obtained from the BARON solver~\cite{sahinidis1996baron}. Both SGP and CCP terminate when the relative change in the objective value between consecutive iterations falls below $10^{-4}$. The error bars in Fig.~\ref{fig:ccpvssgpbaron} indicate the standard deviation across all instances.
	\begin{figure}[t]
		\centering
		\includegraphics[width=\columnwidth]{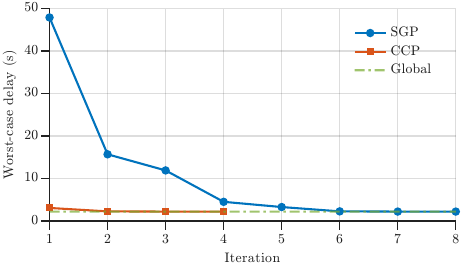}
		\caption{Convergence behavior.}
		\label{fig:convergence}
	\end{figure}
	
	Fig.~\ref{fig:convergence} shows the convergence of SGP and CCP with the BARON global optimum as a reference.
	Both SGP and CCP reach the worst-case delay of $2.218$~s, but CCP converges in $4$
	iterations versus $8$ for SGP. CCP also makes faster initial progress: it
	reaches $3.07$~s after one iteration, whereas SGP needs $5$ iterations to fall below $3.3$~s.
	\begin{figure}[t]
		\centering
		\includegraphics[width=\columnwidth]{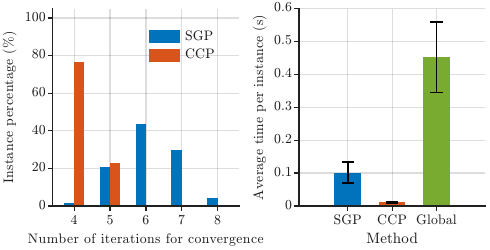}
		\caption{Iteration distribution and average solution time.}
		\label{fig:ccpvssgpbaron}
	\end{figure}
	
	As shown in Fig.~\ref{fig:ccpvssgpbaron}, CCP converges in fewer iterations than SGP, requiring an average of $4.24$ iterations compared with $6.14$ for SGP. CCP converges in $4$ iterations for $76.5\%$ of the instances, whereas SGP most frequently requires $6$ or $7$ iterations. CCP also reduces the average solution time from $0.1012$~s to $0.0113$~s, corresponding to an $8.9$-fold speedup over SGP. While both SGP and CCP approach the global optimum as shown in Fig.~\ref{fig:sweeppgw}, CCP with SOCP subproblems converges significantly faster, consistent with observations in~\cite{chen2023mixed}.
	\begin{figure}[t]
		\centering
		\includegraphics[width=\columnwidth]{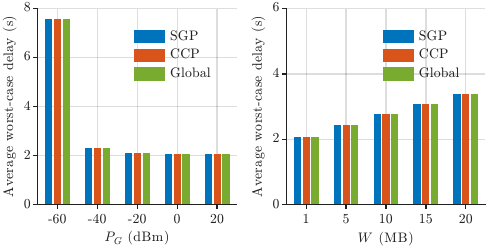}
		\caption{Average worst-case delay versus $P_G$ and $W$ for SGP, CCP, and BARON.}
		\label{fig:sweeppgw}
	\end{figure}
	
	In Fig.~\ref{fig:sweeppgw}, SGP and CCP achieve average worst-case delays nearly identical to the global solution across all tested settings. Increasing $P_G$ from $-60$ to $-40$~dBm reduces the average worst-case delay from about $7.6$~s to $2.3$~s. Increasing $P_G$ to $20$~dBm yields only a modest reduction to approximately $2.1$~s, indicating a diminishing improvement in delay at higher transmit powers. In contrast, increasing $W$ from $1$ to $20$~MB increases the average worst-case delay from about $2.1$~s to $3.4$~s. 
	\section{Conclusion}
This work demonstrates that QE facilitates the explicit construction of exact DC representations.
The resulting formulation uses SOCP subproblems and therefore removes the exponential-cone issue of SGP.
Numerical comparisons with BARON show that both CCP and SGP attain near-global performance, while CCP offers an $8.9$-fold speedup in average solution time.
Although evaluated on a four-node maritime route, Theorem~\ref{theorem_1} extends the exact construction to general $N$-factor sequential coupling. Embedded code generation and network optimization with multiple users, chains, and shared resources remain future work.

	\appendices
	\section{Proof of Theorem~\ref{theorem_1}} \label{Exact_proof}
	\begin{proof}
		Let $\varepsilon_j = (-1)^{a_j}\in\{\pm1\}$ for $j=2,\dots,N$, so that $\Omega=\prod_{j=2}^{N}\varepsilon_j$ and $S_N=\sum_{\boldsymbol{\varepsilon}\in\{\pm1\}^{N-1}} \bigl(\prod_{j=2}^{N}\varepsilon_j\bigr) \bigl(x_1+\sum_{j=2}^{N}\varepsilon_j x_j\bigr)^{N}$. We then fix $j\ge2$ and pair each sign pattern $\boldsymbol{\varepsilon}$ with the pattern obtained by flipping $\varepsilon_j$. The map  from $x_j$ to $-x_j$ exchanges the two paired summands while negating $\prod_{j\ge2}\varepsilon_j$, hence $S_N(x_1,\dots,-x_j,\dots,x_N)=-S_N(x_1,\dots,x_N)$. In other words, $S_N$ is odd in each of $x_2,\dots,x_N$. Consequently, every $x_1^{k_1}\cdots x_N^{k_N}$ of $S_N$ has $k_j$ odd for all $j\ge2$. Since $S_N$ is homogeneous of degree $N$, the $N{-}1$ odd exponents satisfy $\sum_{j=2}^{N}k_j=N-k_1\ge N{-}1$, forcing $k_1\le1$. Moreover, their sum has the parity of $N{-}1$, forcing $k_1=1$ and therefore $k_2=\dots=k_N=1$. Thus $S_N=C\prod_{n=1}^{N}x_n$ for some constant $C$. Using the multinomial expansion, the coefficient of $x_1x_2\cdots x_N$ in $\bigl(x_1+\sum_{j=2}^{N}\varepsilon_j x_j\bigr)^{N}$ is $N!\,\varepsilon_2\cdots\varepsilon_N$. Multiplying by  $\prod_{j=2}^{N}\varepsilon_j$ and using $\varepsilon_j^2=1$, each of the $2^{N-1}$ sign patterns contributes exactly $N!$, so $C=2^{N-1}N!$. Hence $S_N=2^{N-1}N!\prod_{n=1}^{N}x_n$, which is the first equality in \eqref{exact_solution}, and the second follows by negation.
	\end{proof}
	\section{Proof of Theorem~\ref{theorem_2}}
	\label{matrix_proof}
	\begin{proof}
		We index the rows of $\mathbf{A_N}$ by the bit strings
		$\mathbf{a}=(a_2,\dots,a_N)\in\{0,1\}^{N-1}$ and adopt the convention
		$a_1 \equiv 0$, so that the entry in row $\mathbf{a}$ and column $j$ is
		$(-1)^{a_j}$ for every $j=1,\dots,N$. This uniform description places the
		first column on the same footing as the rest, and the proof then reduces
		to evaluating inner products between columns.
		First, consider two distinct columns $j$ and $k$, where we assume
		$k \ge 2$ without loss of generality. Their inner product is
		$\sum_{\mathbf{a}\in\{0,1\}^{N-1}} (-1)^{a_j+a_k}$, which we evaluate by
		pairing each row with its counterpart obtained by toggling the single bit
		$a_k$. Toggling $a_k$ leaves $(-1)^{a_j}$ unchanged but reverses the sign
		of $(-1)^{a_k}$, so the two rows in each pair cancel and the columns are
		mutually orthogonal.
		Next, consider the case in which a column is paired with itself: every entry is $\pm 1$, so each of the $2^{N-1}$ terms equals $1$ and every
		column has squared norm $2^{N-1}$. Collecting the off-diagonal zeros from
		orthogonality and the diagonal value from the squared norm gives
		$\mathbf{A_N^\top}\mathbf{A_N} = 2^{N-1}\mathbf{I_N}$.
		Since $2^{N-1}\mathbf{I_N}$
		is nonsingular, the columns are linearly independent and
		$\mathrm{rank}(\mathbf{A_N})=N$. Since every eigenvalue of $\mathbf{A_N^\top}\mathbf{A_N}$ equals $2^{N-1}$, and the singular values of $\mathbf{A_N}$ are the square roots of these eigenvalues, every singular value $\sigma$ of $\mathbf{A_N}$ equals $\sqrt{2^{N-1}}$. The condition number reduces to $\kappa(\mathbf{A_N})=\sigma_{\max}/\sigma_{\min}=1$.
	\end{proof}
	\bibliographystyle{IEEEtran}
	\bibliography{bibliography}

@article{jung2023satellite,
  title={Satellite Clustering for Non-Terrestrial Networks: Orbital Configuration-Dependent Outage Analysis},
  author={Jung, Dong-Hyun and Ryu, Joon-Gyu and Choi, Junil},
  journal={IEEE Wirel. Commun. Lett.},
  volume={13},
  number={2},
  pages={550--554},
  month = {Feb.},
  year={2023}
}

@article{le2025performance,
  title={On Performance of Cooperative Satellite-{UAV}-secured Reconfigurable Intelligent Surface Systems with Phase Errors},
  author={Le, Anh-Tu and Vu, Thai-Hoc and Nguyen, Tan N and Minh, Bui Vu and Voznak, Miroslav},
  journal={IEEE Commun. Lett.},
    volume={29},
	number={4},
	pages={799-803},
	month = {Apr.},
  year={2025}
}

@article{shi2022delay,
  title={Delay minimization for {NOMA}-mm{W} scheme-based {MEC} offloading},
  author={Shi, Jia and Zhou, Yifan and Li, Zan and Zhao, Zhongling and Chu, Zheng and Xiao, Pei},
  journal={IEEE Internet Things J.},
  volume={10},
  number={3},
  pages={2285--2296},
  month = {Feb.},
  year={2023}
}

@book{horadam2012hadamard,
  title={Hadamard Matrices and Their Applications},
  author={Horadam, Kathy J},
  year={2012},
  publisher={Princeton University Press}
}

@article{park2017general,
  title={General heuristics for nonconvex quadratically constrained quadratic programming},
  author={Park, Jaehyun and Boyd, Stephen},
  journal={arXiv preprint arXiv:1703.07870},
  year={2017}
}

@article{davenport1988real,
  title={Real quantifier elimination is doubly exponential},
  author={Davenport, James H and Heintz, Joos},
  journal={J. Symbolic Comput.},
  volume={5},
  number={1-2},
  pages={29--35},
  month = {Feb.},
  year={1988}
}

@article{chen2023mixed,
  title={Mixed max-and-min fractional programming for wireless networks},
  author={Chen, Yannan and Zhao, Licheng and Shen, Kaiming},
  journal={IEEE Trans. Signal Process.},
  volume={72},
  pages={337--351},
  year={2024},
  publisher={IEEE}
}

@article{huang2024joint,
  title={Joint offloading and resource allocation for hybrid cloud and edge computing in {SAGIN}s: a decision assisted hybrid action space deep reinforcement learning approach},
  author={Huang, Chong and Chen, Gaojie and Xiao, Pei and Xiao, Yue and Han, Zhu and Chambers, Jonathon A},
  journal={IEEE J. Sel. Areas Commun.},
  volume={42},
  number={5},
   pages={1029--1043},
   month = {May},
  year={2024}
}

@book{boyd2004convex,
  title={Convex Optimization},
  author={Boyd, Stephen P and Vandenberghe, Lieven},
  year={2004},
  publisher={Cambridge University Press}
}

@article{lipp2016variations,
  title={Variations and extension of the convex--concave procedure},
  author={Lipp, Thomas and Boyd, Stephen},
  journal={Optim. Eng.},
  volume={17},
  number={2},
  pages={263--287},
  month = {Jun.},
  year={2016}
}

@article{sahinidis1996baron,
  title={{BARON}: a general purpose global optimization software package},
  author={Sahinidis, Nikolaos V},
  journal={J. Glob. Optim.},
  volume={8},
  pages={201--205},
  month = {Mar.},
  year={1996}
}

@article{filabadi2024exponential,
  title={Exponential Conic Relaxations for Signomial Geometric Programming},
  author={Filabadi, Milad Dehghani and Chen, Chen},
  journal={arXiv preprint arXiv:2406.05638},
  year={2024}
}

@article{ahmadi2018dc,
  title={{DC} decomposition of nonconvex polynomials with algebraic techniques},
  author={Ahmadi, Amir Ali and Hall, Georgina},
  journal={Math. Program.},
  volume={169},
  pages={69--94},
  month = {May},
  year={2018}
}

@article{bomze2004undominated,
  title={Undominated {DC} decompositions of quadratic functions and applications to branch-and-bound approaches},
  author={Bomze, Immanuel M and Locatelli, Marco},
  journal={Comput. Optim. Appl.},
  volume={28},
  pages={227--245},
  month = {Jul.},
  year={2004}
}

@inproceedings{murti2024lp,
  title={{LP}-based Construction of {DC} Decompositions for Efficient Inference of {M}arkov Random Fields},
  author={Murti, Chaitanya and Kashyap, Dhruva and Bhattacharyya, Chiranjib},
  booktitle={Proc. 27th Int. Conf. Artificial Intell. and Stat (AISTATS)},
  address={Valencia, Spain},
  pages={3781--3789},
  month = {May},
  year={2024},
  publisher={PMLR}
}

@inproceedings{mckay2005capacity,
  title={Capacity bounds for correlated {Rician} {MIMO} channels},
  author={McKay, Matthew R and Collings, Iain B},
  booktitle={Proc. ICC 2005 },
  volume={2},
  pages={772--776},
  month = {May},
  year={2005},
}

@article{maaz2025new,
  title={A new method for reducing algebraic programs to polynomial programs},
  author={Maaz, Muhammad and others},
  journal={arXiv preprint arXiv:2502.08210},
  year={2025}
}

@BOOK{maral2020satellite,
  author    = {Maral, G{\'e}rard and Bousquet, Michel and Sun, Zhili},
  title     = {Satellite Communications Systems: Systems, Techniques and Technology},
  publisher = {John Wiley \& Sons},
  year      = {2020}
}

@article{schaller2022embedded,
  title={Embedded code generation with {CVXPY}},
  author={Schaller, Maximilian and Banjac, Goran and Diamond, Steven and Agrawal, Akshay and Stellato, Bartolomeo and Boyd, Stephen},
  journal={IEEE Control Syst. Lett.},
  volume={6},
  pages={2653--2658},
  year={2022},
month = {May}
}

@ARTICLE{van2018joint,
  author  = {Van Chien, Trinh and Bj{\"o}rnson, Emil and Larsson, Erik G.},
  title   = {Joint Pilot Design and Uplink Power Allocation in Multi-Cell Massive {MIMO} Systems},
  journal = {IEEE Trans. Wireless Commun.},
  volume  = {17},
  number  = {3},
  pages   = {2000--2015},
  month   = {Mar.},
  year    = {2018},
  doi     = {10.1109/TWC.2017.2787702}
}
\end{document}